\documentclass[a4paper,runningheads,oneside,11pt]{llncs}

\usepackage{amsmath,amsfonts,a4wide,url}
\usepackage{cite}

\newcommand{\card}[1]{\left| #1 \right|}
\newcommand{\calr}{\mathcal{R}}

\newcommand{\Z}{\mathbb{Z}}
\newcommand{\barZ}{\overline{\Z}}

 \newcommand\etal{\emph{et~al.\@}}

\newcommand{\NP}{\mathrm{NP}}
\newcommand{\WII}{\mathrm{W}[2]}
\newcommand{\PNP}{\ensuremath{\mathrm{P} = \mathrm{NP}}}

\newcommand{\Medit}{\textsc{Edit}}
\newcommand{\mMedit}{\textsc{Min} \Medit}
\newcommand{\mfst}{\textsc{Minimum-Flip Supertree}}
\newcommand{\mfct}{\textsc{Minimum-Flip Consensus Tree}}

\newcommand{\rti}{\textup{RTI}}
\newcommand{\mrti}{\textsc{Min} \rti}

\newcommand{\triplet}[3]{\left\langle  #1, #2 \middle|  #3 \right\rangle}

\newcommand{\sfm}{\mathsf{M}}

\newcommand{\defpb}[4]{
\begin{trivlist}
 \item[]\emph{Name}: #1
 \item[]\emph{Input}: #2
 \item[]\emph{Solution}: #3
 \item[]\emph{Measure}: #4
\end{trivlist}
}

 \newcommand{\defpar}[4]{
 \begin{trivlist}
  \item[]\emph{Name}: #1
  \item[]\emph{Input}: #2
   \item[]\emph{Question}: #4
\item[]\emph{Parameter}: #3
 \end{trivlist}
 }

%%%%%%%%%%%%%%%%%%%%%%%%%%%%%%%%%%%%%%%%%%%%%%%%%%%%%%%%%%%%%%%%%%%%%%%%%%%
%%%%  TITLE PAGE  %%%%%%%%%%%%%%%%%%%%%%%%%%%%%%%%%%%%%%%%%%%%%%%%%%%%%%%%%%
%%%%%%%%%%%%%%%%%%%%%%%%%%%%%%%%%%%%%%%%%%%%%%%%%%%%%%%%%%%%%%%%%%%%%%%%%%%%

\begin{document}

\sloppy 

\title{Intractability of the Minimum-Flip Supertree problem and its variants}

\author{
Sebastian B\"ocker\inst{1} \and 
Quang Bao Anh Bui\inst{1} \and 
Fran\c{c}ois Nicolas\inst{1} \and 
Anke Truss\inst{1}}

\institute{
Lehrstuhl f{\"u}r Bioinformatik,
Friedrich-Schiller-Universit{\"a}t Jena, 
Ernst-Abbe-Platz 2, Jena, Germany,
\url{{sebastian.boecker, quangbaoanh.bui, francois.nicolas, anke.truss}@uni-jena.de}
}

\authorrunning{S.~B\"ocker, Q.B.A.~Bui, F.~Nicolas, A.~Truss}

\date{\today}

\maketitle

%%% To include in the letter to the editor:
%%% The terminology is different from Bodlaender et al. : draft-graph instead of fuzzy graph and weights instead of { 0, 1, ? }. 
%%% The paper is already cited. 
%%% The paper concludes the theoretical study of Minimum-Flip Supertree.
%%% Contrast with recent results.

%%%%%%%%%%%%%%%%%%%%%%%%%%%%%%%%%%%%%%%%%%%%%%%%%%%%%%%%%%%%%%%%%%%%%%%%%%%%
%%%%  ABSTRACT  %%%%%%%%%%%%%%%%%%%%%%%%%%%%%%%%%%%%%%%%%%%%%%%%%%%%%%%%%%%%
%%%%%%%%%%%%%%%%%%%%%%%%%%%%%%%%%%%%%%%%%%%%%%%%%%%%%%%%%%%%%%%%%%%%%%%%%%%%

\begin{abstract}
Computing supertrees is a central problem in phylogenetics.
The supertree method that is by far the most widely used today was introduced in 1992 and is called \emph{Matrix Representation with Parsimony analysis (MRP)}.
\emph{Matrix Representation using Flipping (MRF)}, which was introduced in 2002, is an interesting variant of MRP:
MRF is arguably more relevant that MRP and various efficient implementations of MRF have been presented. 
From a theoretical point of view, implementing MRF or MRP is solving NP-hard optimization problems.
The aim of this paper is to study the approximability and the fixed-parameter tractability of the optimization problem corresponding to MRF, 
namely Minimum-Flip Supertree.
We prove strongly negative results.
\end{abstract}

%%%%%%%%%%%%%%%%%%%%%%%%%%%%%%%%%%%%%%%%%%%%%%%%%%%%%%%%%%%%%%%%%%%%%%%%%%%%%
%%%%  SECTION  %%%%%%%%%%%%%%%%%%%%%%%%%%%%%%%%%%%%%%%%%%%%%%%%%%%%%%%%%%%%%%
%%%%%%%%%%%%%%%%%%%%%%%%%%%%%%%%%%%%%%%%%%%%%%%%%%%%%%%%%%%%%%%%%%%%%%%%%%%%%

\section{Introduction}

When studying the evolutionary relatedness of current taxa, 
the discovered relations are usually represented as rooted trees, called \emph{phylogenies}.
Phylogenies for various taxa sets are routinely inferred from various kinds of molecular and morphological data sets.
A subsequent problem is computing \emph{supertrees} \cite{bininda-emonds04phylogenetic}, \emph{i.e.}, 
amalgamating phylogenies for non-identical but overlapping taxon sets to obtain more comprehensive phylogenies. 
Constructing  supertrees is easy if no contradictory information is contained in the data \cite{aho81inferring}.  
However, incompatible input phylogenies are the rule rather than the exception in practice.
The major problem for supertree methods is thus dealing with incompatibilities.

The supertree method that is by far the most widely used today  was independently proposed by Baum \cite{baum92combining} and Ragan \cite{ragan92phylogenetic} in 1992; 
it is called  \emph{Matrix Representation with Parsimony analysis (MRP)} \cite{bininda-emonds04phylogenetic}.
From a theoretical point of view, 
implementing MRP is designing an algorithm for an $\NP$-hard optimization problem~\cite{day86computational-a,foulds82steiner}, 
so the running times of MRP algorithms are sometimes prohibitive for large data sets.

In 2002, 
Chen \etal{} proposed a  variant of MRP \cite{chen02supertrees}, 
which was later called \emph{Matrix Representation using Flipping (MRF)} \cite{chen03flipping}.
MRF is arguably more relevant than MRP \cite{bininda-emonds04phylogenetic} (see also \cite{chen06minimum-flip,eulenstein04performance}), 
and various efficient implementations of MRF have been presented \cite{eulenstein04performance,chen06improved-heuristics,chimani10exact}.
However, as in the case of MRP, 
implementing MRF is designing an algorithm for an $\NP$-hard optimization problem~\cite{chen06minimum-flip}, 
namely \mfst. 
The aim of the present paper is to study the approximability and the fixed-parameter tractability \cite{flum06parameterized} of \textsc{Minimum-Flip Supertree}.
We prove strongly negative results.

%%%%%%%%%%%%%%%%%%%%%%%%%%%%%%%%%%%%%%%%%%%%%%%%%%%%%%%%%%%%%%%%%%%%%%%%%%%%%
%%%%  SECTION  %%%%%%%%%%%%%%%%%%%%%%%%%%%%%%%%%%%%%%%%%%%%%%%%%%%%%%%%%%%%%%
%%%%%%%%%%%%%%%%%%%%%%%%%%%%%%%%%%%%%%%%%%%%%%%%%%%%%%%%%%%%%%%%%%%%%%%%%%%%%

\section{Preliminaries}

For each finite set $X$, 
the cardinality of $X$ is denoted $\card{X}$.
The ring of integers is denoted $\Z$.
Define $\barZ = \Z \cup \{ - \infty, + \infty \}$.

\subsection{Rooted phylogenies}

Let $S$ be a finite set.
A \emph{(rooted) phylogeny} for $S$ is a subset $T$ of the power set of $S$ 
that satisfies the following properties:
$\emptyset \in T$,
$S \in T$,
$\{ s \} \in T$ for all $s \in S$, and 
$X \cap Y \in \{ \emptyset, X, Y \}$ for all $X$, $Y \in T$.
The elements of $S$ are the \emph{leaves} of $T$.
The elements of $T$ are the \emph{clusters} of $T$.
 The most natural  representation of $T$ is, of course, a rooted graph-theoretic tree with $\card{T} - 1$ nodes
 (the empty cluster does not correspond to any vertex).

Given two phylogenies $T_1$ and $T_2$ for $S$,
$T_1$ is a subset of $T_2$ 
if, and only if, 
the graph representation of $T_1$ 
can be obtained from 
the graph representation of $T_2$ by contracting (internal) edges.
If $T_1$ is a subset of $T_2$ and if we assume that hard polytomies never occur then $T_2$ is at least as informative as $T_1$.

\subsection{Bipartite graphs and perfect phylogenies}

A \emph{bipartite graph} is a triple $G = (C, S, E)$, where 
$C$ and $S$ are two finite sets 
and 
$E$ is a subset of $C \times S$.
The elements of $E$ are the \emph{edges} of $G$.
The elements of $(C \times S) \setminus E$ are the \emph{non-edges} of $G$. 
For each $c \in C$, $N_G(c)$ denotes the \emph{neighborhood} of $c$ in $G$:  
$N_G(c) = \left\{ s \in S : (c, s) \in E \right\}$.

Let $\sfm(G)$ denote the set of all quintuples $(s, c, s', c', s'') \in S \times C \times S \times C \times S$ such that 
$(c, s) \in E$, 
$(c, s') \in E$, 
$(c', s') \in E$, 
$(c', s'') \in E$, 
$(c, s'') \notin E$, and
$(c', s) \notin E$. 
The latter conditions state that the bipartite graph depicted in  \cite[Figure~4]{peer04incomplete} is an induced subgraph of $G$.
A \emph{perfect phylogeny} for $G$ is a phylogeny $T$ for $S$ such that $N_G(c)$ is a cluster of $T$ for every $c \in C$.
We say that $G$ is \emph{$\sfm$-free} \cite{chen02supertrees,chen06minimum-flip,chimani10exact,boecker08improved,komusiewicz08cubic}
(or \emph{$\mathsf{\Sigma}$-free}) \cite{peer04incomplete,bininda-emonds04phylogenetic,chen03flipping}) 
if the following three equivalent conditions are met: 
\begin{enumerate}
 \item for all $c$, $c' \in C$, $N_G(c) \cap N_G(c') \in \left\{ \emptyset,  N_G(c), N_G(c') \right\}$,
 \item $\sfm(G)$ is empty, and  
 \item there is a perfect phylogeny for $G$.
\end{enumerate}
Put $T_G = \{ \emptyset, S \} \cup \left\{ N_G(c) : c \in C  \right\} \cup \left\{ \{ s \} : s \in S \right\}$.
If $G$ is $\sfm$-free then $T_G$ satisfies the following two properties:
\begin{enumerate} 
 \item $T_G$ is a perfect phylogeny for $G$. 
 \item $T_G$ is a subset of any perfect phylogeny for $G$.
\end{enumerate}

\paragraph{Modelization.}
In our model,
$S$ is a set of \emph{species} (or more generally \emph{taxa})
and
$C$ is a set of binary \emph{characters}.
For each $(c, s) \in C \times S$,  
$(c, s) \in E$ means that species $s$ possesses character $c$ and
$(c, s) \notin E$ means that species  $s$ does not possess character $c$.
Character data come from the morphological and/or molecular properties of the taxa \cite{gusfield97algorithms}.
The assumption of the model is that for all $c \in C$ and all $s$, $s' \in S$,
the following two assertions are equivalent:
 \begin{enumerate}
  \item Both species $s$ and $s'$ possess character $c$.
  \item Some common ancestor of species $s$ and $s'$ possesses character $c$.
 \end{enumerate}
A phylogeny for $S$ satisfies the assumption of the model if, and only if, it is a perfect phylogeny for $G$.

\subsection{Dealing with incomplete and/or erroneous data sets}

A \emph{bipartite draft-graph} (or \emph{weighted bipartite fuzzy graph} \cite{bodlaender10clustering}) is a triple $H = (C, S, F)$, where 
$C$ and $S$ are two finite sets 
and 
$F$ is a function from $C \times S$ to $\barZ$.
The function $F$ is the \emph{weight function} of $H$.
The range of $F$ is called the \emph{weight range} of $H$.
An \emph{edge} of $H$ is an element $e \in C \times S$ such that  $F(e) \ge 1$.
A \emph{joker-edge} of $H$ is an element $e \in C \times S$ such that  $F(e) = 0$.
A \emph{non-edge} of $H$ is an element $e \in C \times S$ such that  $F(e) \le -1$.

For each $e \in C \times S$, the magnitude of $F(e)$ is the \emph{edit cost} of $e$ in $H$.
An \emph{edition} of $H$ is a bipartite graph $G$ of the form $G = (C, S, E)$ for some subset $E \subseteq C \times S$. 
A \emph{conflict} between $G$ and $H$ is an element  $e \in C \times S$ that satisfies one of the following two conditions:
 \begin{enumerate}
  \item $e$ is an edge of $G$ and $e$ is a non-edge of $H$ or 
  \item $e$ is a non-edge of $G$ and $e$ is an edge of $H$. 
 \end{enumerate}
The sum of the edit costs in $H$ over all conflicts between $G$ and $H$ is denoted $\Delta(G, H)$:
$$
\Delta(G, H) = \sum_{e \in E} \max \{ 0, - F(e) \} + \sum_{e \in E \setminus (C \times S)}   \max \{ 0,  F(e) \} \, .
$$

%\todo{The previous definitions simply mimic Sebastian's conventions from his last cluster editing paper.}
The following minimization problem and its (parameterized) decision version generalize several previously studied problems:

\defpb{\textsc{Minimum} $\sfm$-\textsc{Free Edition} or \mMedit.}
{A bipartite draft-graph $H$.}
{An $\sfm$-free edition $G$ of $H$.}
{$\Delta(G, H)$.}

\defpar{$\sfm$-\textsc{free Edition} or \Medit.}
{A bipartite draft-graph $H$ and an integer $k \ge 0$.}
{$k$.}
{Is there an $\mathsf{M}$-free edition $G$ of $H$ such that $\Delta(G, H) \le k$?}

For each subset $X \subseteq  \barZ$,
define \mMedit-$X$ as the restriction of \mMedit{} to those bipartite draft-graphs whose weight ranges are subsets of $X$,
and similarly, 
define \Medit-$X$ as the restriction of \Medit{} to  those instances $(H, k)$ such that the weight range of $H$ is a subset of $X$.
Notably, 
\mMedit-$\{ - 1, + 1 \}$ is the \mfst{} problem 
and its restiction \mMedit-$\{ - 1, 0, + 1 \}$ is the \mfct{} problem
 \cite{boecker08improved,komusiewicz08cubic,chen06minimum-flip,chen02supertrees,chen03flipping,chimani10exact,chen06improved-heuristics, eulenstein04performance, bininda-emonds04phylogenetic}.

\paragraph{Modelization.}
Incomplete and/or possibly erroneous character data sets are naturally modeled by bipartite draft-graphs:
joker-edges represent incompletenesses and edit costs allow parsimonious error-corrections.

\paragraph{Supertrees.}
The most interesting feature of \mMedit{} is that it can be thought as a supertree construction problem, 
and more precisely, 
the optimization problem underlying MRF
\cite{bininda-emonds04phylogenetic, chen06minimum-flip, chen03flipping, chen02supertrees}.

\section{Previous results}

\mMedit-$X$ has been studied for several subsets $X \subseteq \barZ$ \cite{gusfield91efficient, gusfield97algorithms, peer04incomplete, boecker08improved, komusiewicz08cubic, chen06minimum-flip, chen03flipping, chimani10exact, chen06improved-heuristics, eulenstein04performance,bininda-emonds04phylogenetic}, sometimes implicitely.
Let $H = (C, S, F)$ be a bipartite draft-graph and let $k$ be a non-negative integer.
% Let $W$ denote the weight range of $H$.

Put 
 $\barZ_+ = \left\{ n \in \barZ : n \ge 0 \right\}$ 
and  
$\barZ_- = \left\{ n \in \barZ : n \le 0 \right\}$.
If  $H$ has no non-edge, or equivalently, if the weight range of $H$ is a subset of $\barZ_+$ then 
the complete bipartite graph $K = (C, S, C \times S)$  is an $\sfm$-free edition of $H$ such that $\Delta(K, H) = 0$.
In the same way, 
if $H$ has no edge then 
the empty bipartite graph $\overline{K} = (C, S, \emptyset)$ is an $\sfm$-free edition of $H$ such that $\Delta(\overline{K}, H) = 0$.
Hence, 
\mMedit-$\barZ_+$ and \mMedit-$\barZ_-$  are trivial problems.

Now, consider the case where the weight range of $H$ is a subset of $\left\{ - \infty,  + \infty \right\}$. 
The bipartite graph 
$$G = \left(C, S, \left\{ e \in C \times S : F(e) = + \infty  \right\} \right)$$
 is an edition of $H$ such that 
$\Delta(G, H) = 0$;
for every edition $G'$ of $H$, $G' \ne G$ implies $\Delta(G', H) = +\infty$.
Therefore, 
solving \mMedit{} on $H$ reduces to deciding whether $G$ is $\sfm$-free, 
which can be achieved in $O( \card{C} \card{S} )$ time \cite{gusfield91efficient,gusfield97algorithms}.
Hence, 
\mMedit-$\left\{ - \infty,  + \infty \right\}$ can be solved in polynomial time because it reduces to the recognition problem associated with the class of $\sfm$-free bipartite graphs.
More generally, \mMedit-$\left\{ - \infty, 0, + \infty \right\}$ can also be solved in polynomial time 
because it reduces to the sandwich problem \cite{golumbic95sandwich} associated with the class of $\sfm$-free bipartite graphs: 
in the case where the the weight range of $H$ is a subset of $\left\{ - \infty, 0, + \infty \right\}$, 
\mMedit{}  can be solved on $H$ in $\widetilde O( \card{C} \card{S} )$ time \cite{peer04incomplete}. 

Put 
$I = \left\{ -1,  +\infty \right\}$,
$D = \left\{ -\infty, + 1 \right\}$, and
$U = \left\{ - 1, + 1 \right\}$. 
\mMedit-$I$, 
\mMedit-$D$, 
and 
\mMedit-$U$ (also known as \mfct)
are the three unweighted edge-modification problems \cite{natanzon01edge} associated with the class of $\sfm$-free bipartite graphs:
\mMedit-$I$ is the insertion (or completion) problem
and 
\mMedit-$D$ is the deletion problem.
\Medit-$I$, 
\Medit-$D$, 
and 
\Medit-$U$
are $\NP$-complete \cite{chen06minimum-flip}.
% It has been claimed without proof 
% that 
% \mMedit-$D$ 
% and 
% \mMedit-$U$ are approximable within ratio $12$ \cite{chen03flipping}.

Put $\barZ^* = \barZ \setminus \{ 0 \}$. 
\mMedit-$\barZ^*$ is the restriction of \mMedit{} to those bipartite draft-graphs that have no joker-edge.
The most positive result concerning \mMedit{} is that \Medit-$\barZ^*$ is FPT:
in the case where the weight range of $H$ is a subset of $\barZ^*$,
deciding whether $(H, k)$ is a yes-instance of \Medit{} 
(and if so, computing an $\sfm$-free edition $G$ of $H$ such that $\Delta(G, H) \le k$) 
can be achieved in $O(6^k \card{C} \card{S})$ time \cite{chen06minimum-flip}.
Better FPT algorithms have been presented for the special cases 
\mMedit-$I$ \cite{chen06minimum-flip}, 
\mMedit-$D$ \cite{chen06minimum-flip}, and \mMedit-$U$ \cite{boecker08improved,komusiewicz08cubic}.
In particular, \Medit-$U$ has a polynomial kernel  \cite{komusiewicz08cubic}.

Exact algorithms based Integer Linear Programming \cite{chimani10exact}, 
as well as heuristics \cite{eulenstein04performance,chen06improved-heuristics},
have been tested for \mMedit-$\left\{ - 1, 0, + 1 \right\}$  (also known as \mfst).

\section{Contribution} 

The aim of the present paper is to complete the study of \mMedit{} by proving:

\begin{theorem}  \label{th:edit}
For all $\alpha$, $\beta \in \barZ$ such that $- \alpha < 0 < \beta$ and $(\alpha, \beta) \ne (+ \infty, + \infty)$,
the following two statements hold:
\begin{enumerate}
 \item  \Medit-$\{ - \alpha , 0, \beta \}$ is $\WII$-hard and 
 \item if there exists a real constant $\rho \ge 1$ such that  \mMedit-$\{ - \alpha , 0, \beta \}$ is $\rho$-approximable in polynomial time then $\PNP$.
\end{enumerate}
\end{theorem}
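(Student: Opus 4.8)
The plan is to reduce \textsc{Dominating Set} (equivalently \textsc{Set Cover}, parameterized by the number of sets) to \mMedit-$\{-\alpha,0,\beta\}$ by a single polynomial-time reduction that is at once an FPT reduction and approximation-gap preserving. This yields both statements simultaneously: \textsc{Dominating Set} is $\WII$-complete, and, being equivalent to \textsc{Set Cover}, its optimum is $\NP$-hard to approximate within any constant --- in gap form, for every constant $c\ge1$ it is $\NP$-hard to distinguish instances with optimum at most $t$ from instances with optimum more than $ct$, with $t$ unbounded. So if the reduction sends an instance of optimum $o$ to a draft-graph of optimum $\lambda o+\mu$, for constants $\lambda>0$ and $\mu\ge0$ depending only on $\alpha$ and $\beta$, then statement~1 follows with new parameter $\lambda k+\mu$, and statement~2 follows because for $c$ large enough a polynomial-time $\rho$-approximation of \mMedit-$\{-\alpha,0,\beta\}$ would decide the gap problem, whence $\PNP$. (A merely gap-preserving reduction, not exactly affine, suffices just as well.)

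For the construction it is natural, given the phylogenetic setting, to route things through rooted triplets: a character whose neighborhood is $\{a,b\}$ forces, in any perfect phylogeny in which $a$, $b$, $c$ are distinct leaves, exactly the triplet $\triplet{a}{b}{c}$, i.e.\ ``$\{a,b\}$ is a cluster avoiding $c$''. So one first shows that \mrti{} --- remove a minimum number of triplets to make the remainder consistent --- is $\WII$-hard and not constant-factor approximable, by a reduction from \textsc{Set Cover}, and then reduces \mrti{} to \mMedit-$\{-\alpha,0,\beta\}$: use one leaf per leaf of the triplet instance, add a rigid ``backbone'' of auxiliary leaves and characters pinning down the upper part of the tree, and turn each triplet $\triplet{a}{b}{c}$ into a character --- or a bundle of identical copies --- whose cheapest $\sfm$-free realization costs $0$ when the chosen tree realizes the triplet and a fixed positive amount otherwise. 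A laminar edition then corresponds to a set of triplets that may be kept, the rest being exactly those the backbone-plus-tree cannot realize, and its cost is a fixed affine function of the number of removed triplets; the hardness and inapproximability of \mrti{} transfer. (A direct reduction from \textsc{Set Cover}, skipping \mrti{}, is also possible but messier.)

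The real obstacle is uniformity over \emph{every} weight set $\{-\alpha,0,\beta\}$ other than $\{-\infty,0,+\infty\}$. Making the backbone literally rigid would need its edges to be unremovable --- hence $\beta=+\infty$ --- and its non-edges to be uninsertable --- hence $\alpha=+\infty$ --- i.e.\ exactly the excluded pair; this is precisely why $(+\infty,+\infty)$ is the easy case, the $\sfm$-free sandwich problem. In every case covered by the theorem at least one side is flexible, and the edit costs $\alpha,\beta$ are fixed with no rescaling available; the fix is to replace each flexible ``strong'' edge or non-edge of the backbone by a bundle of $W$ parallel copies, with $W$ large compared to the number of triplet-gadgets (this keeps the reduction polynomial and the new parameter a function of the old one, since the optimum itself is insensitive to $W$). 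One must then check that perturbing such a bundle costs at least $W\cdot\min\{\alpha,\beta\}$, which dominates every saving obtainable on the triplet side, so that the backbone is effectively rigid and the intended laminar edition is optimal. The two cases with one infinite side are mirror images of each other, and the all-finite case contains, for $\alpha=\beta=1$, the problem \mMedit-$\{-1,0,+1\}$, for which only $\NP$-hardness was previously known. Since an $\sfm$-conflict can be repaired in several ways --- by deleting or inserting any one of four prescribed edges --- the heart of the argument is the lower bound: ruling out ``cheating'' editions that beat $\lambda o+\mu$, rather than exhibiting a good one.
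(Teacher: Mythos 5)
Your overall strategy --- a single reduction through \mrti{} that simultaneously yields $\WII$-hardness and constant-factor inapproximability --- is exactly the paper's (which simply cites Byrka, Guillemot and Jansson for the hardness of \rti{}/\mrti{} rather than rederiving it from \textsc{Set Cover}). But your reduction from \mrti{} to \Medit-$\{-\alpha,0,\beta\}$ has a genuine gap, and the gap sits precisely where you locate ``the heart of the argument.'' First, the backbone-plus-bundles construction is never specified, and the lower bound ruling out cheating editions is not proved; for the all-finite case this is the entire difficulty, not a detail. Second, ``replace each strong edge by a bundle of $W$ parallel copies'' is not directly available in a draft-graph with weight range $\{-\alpha,0,\beta\}$: a single cell of $C\times S$ cannot be given weight $W\beta$, so bundling must be simulated by duplicating characters (rows) or species (columns), and then each copy may be edited \emph{independently} in the edition --- nothing forces the $W$ copies to keep identical neighborhoods, so the claimed cost of $W\cdot\min\{\alpha,\beta\}$ for ``perturbing a bundle'' does not follow without a separate argument. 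As written, the proposal proves the easy direction (a good tree gives a cheap edition) and leaves the hard direction open.

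The paper's construction shows that the entire rigidity apparatus is unnecessary, because the weight set always contains $0$. Each triplet $\triplet{x_c}{y_c}{z_c}$ becomes a \emph{single} character $c$ with $F(c,x_c)=F(c,y_c)=\beta$, $F(c,z_c)=-\alpha$, and $F(c,s)=0$ for every other species $s$; there is no backbone and no auxiliary leaf. Since all species outside $\{x_c,y_c,z_c\}$ are joker-edges, a character whose triplet fits a phylogeny $T$ can take as its neighborhood the witnessing cluster $X_c$ at cost $0$, and a character whose triplet does not fit can take $N_G(c)=S$ (cost $\alpha$) or $N_G(c)=\{x_c\}$ (cost $\beta$), i.e.\ cost exactly $\gamma=\min\{\alpha,\beta\}$; all these neighborhoods are clusters of $T$, so the edition is $\sfm$-free with $T$ as a perfect phylogeny. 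Conversely, any character with no conflict on its three distinguished species satisfies $N_G(c)\cap\{x_c,y_c,z_c\}=\{x_c,y_c\}$, so its triplet fits every perfect phylogeny of the edition, and every character with a conflict contributes at least $\gamma$. The optimum of the \Medit{} instance is therefore exactly $\gamma$ times the optimum of the \mrti{} instance, which gives both the FPT reduction (parameter $\gamma k$) and exact preservation of approximation ratios. Note also that the exclusion of $(\alpha,\beta)=(+\infty,+\infty)$ is exactly the condition $\gamma<+\infty$, i.e.\ that an unfit triplet can be repaired at finite cost --- not, as you suggest, a matter of backbone rigidity.
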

The intractabilities of 
\mMedit-$\left\{ - 1, 0, + \infty \right\}$,
\mMedit-$\left\{ - \infty, 0, + 1 \right\}$, and 
 \mMedit-$\left\{ - 1, 0, + 1 \right\}$  (also known as \mfst)
follow from Theorem~\ref{th:edit}.

Our proof of Theorem~\ref{th:edit} requires the introduction of some material and results from the literature \cite{byrka2010new}.
For all $x$, $y$, $z$, 
let 
$\triplet{x}{y}{z}$ denote the unique phylogeny for $\{ x, y, z \}$ having $\{ x, y \}$ as a cluster:
$$
\triplet{x}{y}{z} = \left\{ \emptyset, \{ x \}, \{ y \}, \{ z \}, \{ x, y \},  \{ x, y, z \} \right\} \, .
$$
A \emph{resolved triplet} is a phylogeny of the form $\triplet{x}{y}{z}$ for some pairwise distinct $x$, $y$, $z$. 
Given a phylogeny $T$ for some superset of $\{ x, y, z \}$, 
we say that $\triplet{x}{y}{z}$ \emph{fits} $T$ 
if there exists a cluster $X$ of $T$ such that $X \cap  \{ x, y, z \} = \{ x, y \}$.

\defpb{\textsc{Minimum Resolved Triplets Inconsistency} or \mrti.}
{A finite set $S$ and a set $\calr$ of resolved triplets with leaves in $S$.}
{A phylogeny $T$ for $S$.}
{The number of those elements of $\calr$ that do not fit $T$.}  

\defpar{\textsc{Resolved Triplets Inconsistency} or \rti.}
{A finite set $S$, a set $\calr$ of resolved triplets with leaves in $S$, and an integer $k \ge 0$.}
{$k$.}
{Is there a phylogeny $T$ for $S$ such that at most $k$ elements of $\calr$ do not fit  $T$?}

\begin{theorem}[Byrka, Guillemot, and Jansson 2010 \cite{byrka2010new}]  \label{th:rti}\begin{enumerate}
\item \rti{} is $\WII$-hard. 
\item 
If there exists a real constant $\rho \ge 1$ such that  \mrti{} is $\rho$-approximable in polynomial time then $\PNP$.
\end{enumerate}
\end{theorem}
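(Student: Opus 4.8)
The plan is to establish both parts through a single polynomial-time, parameter-preserving reduction from \textsc{Dominating Set}, which is $\WII$-complete when parameterized by the size of the sought solution and is $\NP$-hard to approximate within any constant factor (indeed within $c \ln \card{V}$ for a suitable $c > 0$ unless $\PNP$). Given a graph $G = (V, E)$ and an integer $k$, I would build a leaf set $S$ and a set $\calr$ of resolved triplets whose minimum inconsistency equals the size of a minimum dominating set of $G$. Since Aho \etal{}'s classical \textsc{Build} algorithm decides consistency of a triplet set in polynomial time, the hardness cannot come from distinguishing inconsistency $0$ from inconsistency $\geq 1$; instead I would arrange that the cost of a phylogeny tracks the \emph{number of selected vertices}, so that deciding whether $(S, \calr)$ admits a tree with at most $k$ inconsistencies becomes exactly deciding whether $G$ has a dominating set of size $k$.

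Concretely, I would reserve a block of \emph{selection} leaves whose relative positions in $T$ encode which vertices of $G$ are chosen, together with a small \emph{selection gadget} of resolved triplets that charges exactly one inconsistency per chosen vertex. For each vertex $v \in V$ I would then attach a \emph{domination gadget}: a constant-size family of triplets, over fresh leaves together with the selection leaves of $v$ and its neighbours, that is simultaneously satisfiable precisely when some vertex among $v$ and its neighbours occupies the ``selected'' position, and that forces at least one inconsistency otherwise. To make leaving a vertex undominated strictly costlier than any honest selection, I would replicate each domination gadget $\card{V} + 1$ times over disjoint fresh leaves; an undominated vertex then costs at least $\card{V} + 1$, exceeding the size of any dominating set, so every minimum-inconsistency phylogeny dominates all of $V$.

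With this construction I expect two lemmas. \textbf{Completeness:} from a dominating set $D$ one builds a caterpillar-like phylogeny realizing the selection of $D$, in which every (replicated) domination gadget is satisfiable and only the selection gadget contributes, for a total of exactly $\card{D}$ inconsistent triplets. \textbf{Soundness:} from any phylogeny $T$ with fewer than $\card{V} + 1$ inconsistencies one reads off the set $D_T$ of vertices in the selected position; the replication forces $D_T$ to dominate $V$, while the selection gadget forces at least $\card{D_T}$ inconsistencies, so $\card{D_T}$ is at most the inconsistency count. Hence the minimum inconsistency of $(S, \calr)$ equals the minimum dominating set size of $G$. The parameter is preserved (one may take $k' = k$, as the relevant regime is $k \le \card{V}$), giving the $\WII$-hardness of \rti{}; and the optimum is preserved \emph{exactly}, so a polynomial-time $\rho$-approximation for \mrti{} would yield a polynomial-time $\rho$-approximation for \textsc{Dominating Set}, which is impossible for any constant $\rho \ge 1$ unless $\PNP$.

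The main obstacle is the joint design of the selection and domination gadgets so that one tree can encode an arbitrary vertex selection and, for every dominated vertex, satisfy all of that vertex's replicated domination triplets at once; that is, proving \emph{simultaneous realizability} in both directions. The delicate point is that all gadgets share the same selection leaves, so their triplets interact: I must ensure that the gadgets impose no constraints on one another beyond the intended selection and domination semantics, that honest selections never incur hidden inconsistencies, and that the fan (non-resolving) restriction of a triplet is accounted for. Getting these interactions exactly right is what makes the clean identity ``minimum inconsistency $=$ minimum dominating set'' hold with no additive slack, which is precisely what lets the constant-factor gap survive the reduction.
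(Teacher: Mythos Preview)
The paper does not prove Theorem~\ref{th:rti}. It is quoted verbatim from Byrka, Guillemot, and Jansson~\cite{byrka2010new} and invoked as a black box in the derivations of Theorem~\ref{th:edit}.1 and Theorem~\ref{th:edit}.2; no argument for it appears anywhere in the paper. There is consequently nothing on the paper's side to compare your proposal against.

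On the proposal itself: the high-level strategy---a parameter- and optimum-preserving reduction from \textsc{Dominating Set}---is a plausible route to both conclusions, and is in the spirit of how the cited result is actually obtained. But what you have written is a plan, not a proof. The selection and domination gadgets are named but never constructed, and the central claim that one can engineer the \emph{exact} identity ``minimum inconsistency $=$ minimum dominating-set size'' with no additive or multiplicative slack is asserted rather than demonstrated. You correctly identify simultaneous realizability as the crux: all gadgets share the selection leaves, so one must exhibit, for every dominating set $D$, a single phylogeny that places the selection leaves for $D$ in the ``selected'' configuration \emph{and} satisfies every one of the $(\card{V}+1)$-fold replicated domination triplets for every vertex, incurring exactly $\card{D}$ inconsistencies and no more. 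A ``caterpillar-like'' shape does not obviously achieve this for arbitrary $D$, and until the triplets are written down and both lemmas checked, the argument has a genuine gap at precisely the point where such reductions typically fail.
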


The idea behind the proof of Theorem~\ref{th:edit} is the following:
given an instance $(S, \calr)$ of \mrti, 
computing a ``good'' solution of \mrti{} on $(S, \calr)$ is computing a ``good'' MRF  supertree for the phylogenies in $\calr$. 

\begin{proof}[Proof of Theorem~\ref{th:edit}.1]
Theorem~\ref{th:edit}.1 is deduced from Theorem~\ref{th:rti}.1: 
we show that \rti{} FPT-reduces to \Medit-$\{ -\alpha, 0, \beta \}$.
Put $\gamma = \min \{ \alpha, \beta \}$. 
Note that $\gamma$ is a positive integer.

Let $(S, \calr, k)$ be an arbitrary instance of \rti.
The reduction maps $(S, \calr, k)$  to an instance $(H, \gamma k)$ of \Medit-$\{ -\alpha, 0, \beta \}$, where $H$ is as follows.
Let  $C = \left\{ 1, 2, \dotsc,  \card{\calr} \right\}$.
Write $\calr$ in the form 
$$\calr = \left\{ \triplet{x_c}{y_c}{z_c} : c \in C \right\} \, .$$
Let $F$ be the function from $C \times S$ to $\barZ$ given by:
$$
F(c, s) = 
\begin{cases}
\beta & \text{if $s \in \{ x_c, y_c \}$} \\
  0 & \text{if $s \notin \{ x_c, y_c, z_c \}$} \\
- \alpha & \text{if $s = z_c$} 
\end{cases}
$$
for all $(c, s) \in C \times S$.
Let $H = (C, S, F)$.

Clearly  $(H, \gamma  k)$ is computable from $(S, \calr, k)$ in polynomial time.
It remains to prove that 
$(S, \calr, k)$ is a yes-instance of  \rti{} 
if, and only if, 
 $(H, \gamma k)$ is a yes-instance of \Medit.

\paragraph{If.}
Assume that $(H, \gamma k)$ is a yes-instance of \Medit{}.
Then, there exists an $\sfm$-free edition $G$ of $H$ such that $\Delta(G, H) \le \gamma k$.
Let $C'$ denote the set of all $c \in  C$ such that $(c, s)$ is a conflict between $G$ and $H$ for at least one $s \in \{ x_c, y_c, z_c \}$.
Since there are at least $\card{C'}$ conflicts between $G$ and $H$, 
we have $\gamma \card{C'} \le \Delta(G, H)$,
and thus $\card{C'} \le k$.
Let $T$ be a perfect phylogeny for $G$.
For each $c \in C \setminus C'$, 
we have $N_G(c) \cap \{ x_c, y_c, z_c\} = \{ x_c, y_c \}$, 
and thus  $\triplet{x_c}{y_c}{z_c}$ fits  $T$. 
Hence, $T$ is a phylogeny for $S$ such that at most  $k$ elements of $\calr$ do not fit $T$. 
Therefore, $(S, \calr, k)$ is a yes-instance of \rti.

\paragraph{Only if.}
Assume that $(S, \calr, k)$ is a yes-instance of \rti. 
 Then, there exists a phylogeny $T$ for $S$ such that at most  $k$ elements of $\calr$ do not fit $T$. 
Let $C'$ denote the set of all $c \in C$ such that $\triplet{x_c}{y_c}{z_c}$ does not fit $T$.
For each $c \in C \setminus C'$, 
let $X_c$ be a cluster of $T$ such that $X_c \cap \{ x_c, y_c, z_c \} = \{ x_c, y_c \}$.
If  $\alpha \le \beta$ then let $X_c = S$ for each $c \in C'$;
if $\beta < \alpha$ then let $X_c = \{ x_c \}$ for each $c \in C'$.
Put $G =  \left( C, S, \bigcup_{c \in C} \{ c \} \times X_c   \right)$.
\begin{enumerate}
 \item $G$ is an edition of $H$.
 \item $T$ is a perfect phylogeny for $G$ because $N_G(c) = X_c$ is a cluster of $T$ for all $c \in C$.
Therefore, $G$ is $\sfm$-free.
 \item Let $\Gamma$ denote the set of all conflicts between $G$ and $H$.
If $\alpha \le \beta$ then  $\Gamma = \left\{ (c, z_c) : c \in C' \right\}$; 
if $\beta < \alpha$ then $\Gamma = \left\{ (c, y_c) : c \in C' \right\}$.
The edit cost in $H$ of every conflict between $G$ and $H$ equals $\gamma$.
Therefore, we have $\Delta(G, H) = \gamma \card{ \Gamma } = \gamma \card{C'} \le \gamma k$. 
\end{enumerate}
Hence, $(H, \gamma k)$ is a yes-instance of \Medit.
\qed
\end{proof}

\begin{proof}[Proof of Theorem~\ref{th:edit}.2.]
Let  $\rho$ be real number greater than or equal to $1$.
It follows from the proof of Theorem~\ref{th:edit}.1 that
if $G$ is a $\rho$-approximate solution of \mMedit{} on $H$ 
then any perfect phylogeny for $G$ is a $\rho$-approximate solution of \mrti{} on $(S, \calr)$.
Therefore, if \mMedit{} is $\rho$-approximable in polynomial time then \mrti{} is also $\rho$-approximable in polynomial time.
It is now clear that Theorem~\ref{th:edit}.2 follows from Theorem~\ref{th:rti}.2.
\qed
\end{proof}

\section{Conclusion}

 To conclude, let us contrast Theorem~\ref{th:edit} with two recent results.

The \textsc{Maximum Parsimony} (MP) problem \cite{alon10approximate} is the $\NP$-hard optimization problem \cite{day86computational-a,foulds82steiner} 
underlying MRP, 
as \mMedit{}  is the optimization problem underlying MRF.
Although \mMedit{} is $\NP$-hard to approximate within any constant factor by Theorem~\ref{th:edit}.2, 
MP is $1.55$-approximable in polynomial time \cite{alon10approximate}.

The parameterized problems \Medit{} and \textsc{Weighted Fuzzy Cluster Editing (WFCE)}  \cite{bodlaender10clustering} are closely related: 
WFCE is the draft-graph edition problem corresponding to the class of $P_3$-free graphs.
\Medit{} is $\WII$-hard by Theorem~\ref{th:edit}.1 but WFCE has been recently shown to be fixed-parameter tractable \cite{marx11fixed-parameter} (see also \cite{bousquet11multicut,demaine06correlation}).

\bibliographystyle{abbrv}
\bibliography{bibtex/group-literature}

\end{document}